\newtheorem{theorem}{Theorem}[section]
\newtheorem{lemma}[theorem]{Lemma}
\newtheorem*{thm*}{Theorem}
\newtheorem*{prop*}{Proposition}
\newtheorem*{conj*}{Conjecture}
\newtheorem*{quest*}{Question}
\providecommand{\keywords}[1]
{
  \small	
  \textbf{\textit{Keywords---}} #1
}
\title{Physical Zero-knowledge Proofs for Flow Free, Hamiltonian Cycles, and Many-to-many k-disjoint Covering Paths}
\author{Eammon Hart and Joshua A. McGinnis}
\date{\today}
\begin{document}
\maketitle

\begin{abstract}
In this paper we describe protocols which use a standard deck of cards to provide a perfectly sound zero-knowledge proof for Hamiltonian cycles and Flow Free puzzles. The latter can easily be extended to provide a protocol for a zero-knowledge proof of many-to-many k-disjoint path coverings. 
\end{abstract}

\keywords{Flow Free, Hamiltonian cycles, disjoint covering paths, ZKP, card-based cryptography, Numberlink, graph, puzzle}

\section{Introduction}
Hamiltonian cycles have %occupied% 
been a go to example for introductions to zero-knowledge proofs for decades 
\cite{blum1986prove}, 
\cite{Feige1999NonInteractive}, but as far as we can tell all existing proofs are probabilistic. Unlike those methods in this paper we use a deck of cards to provide a perfectly sound zero-knowledge proof for the possession of a Hamiltonian cycle, i.e. the probability of deception is 0.

Flow Free is a popular logic puzzle app game which has over 100,000,000 downloads in the Google play story \cite{google} and whose spin off variants of Flow Free Hexes \cite{google2} and Flow Free Bridges \cite{google3} have millions of downloads as well.  They are variants of the Numberlink puzzle which dates back to at least 1897 when it was published by Sam Loyd.  A similar game was published by Nikoli, the Japanese publisher famous for Sudoku. Flow Free involves an $n \times m$ grid with labeled cells, such that one must connect given pairs of cells with paths that collectively cover all of the cells. The Nikoli variant places the additional requirement of minimizing the turns in the paths connecting the given pairs of cells, but drops the requirement that every cell must be filled. This means that solution paths between connected pair of cells may need to be non-simple, which adds some additional complexity to the problem of finding a zero-knowledge proof. 
\begin{figure}
    \centering
    \includegraphics[scale=.5]{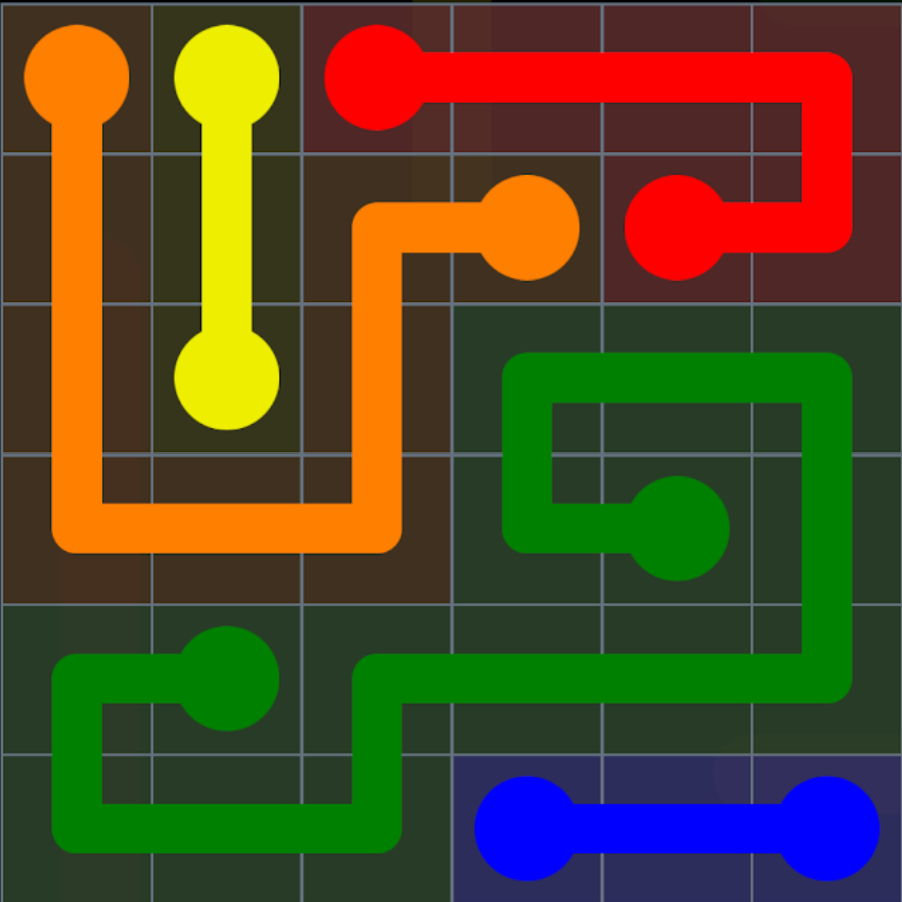}
    \caption{A solved game from the Flow Free app, where the solution contains paths which are non-simple. Both the green path and red path are non-simple.}
    \label{fig:FlowFree}
\end{figure}
Flow Free also relates deeply to research on a generalization of the problem of finding a Hamiltonian path with given starting points called the disjoint covering paths problem (studied by \cite{Park2021ToruslikeGA}, \cite{KRONENTHAL201714}, \cite{ZHANG2013103}, \cite{PARK2015168}, and \cite{MORAN1991179} among many others), because the solution to each game can easily be mapped to a disjoint covering path for the underlying adjacency graph of the game. Numberlink corresponds to the question of vertex-disjoint paths that do not necessarily cover every vertex.  Past study of the game has primarily been from the computer science direction, with solving Flow Free and Zig Zag Numberlink being shown to be NP Complete \cite{AdcockAaron2015ZNiN}, algorithmic explorations enumerating all solutions to a board, identifying games with a unique solution in \cite{a5020176}, and efficiently solving puzzles using Monte Carlo methods \cite{8848043}. \cite{nef2020development} has also recently used it in assessing neurological degeneration. More directly related to our paper \cite{RuangwisesSuthee2020PZPf} provides a physical zero-knowledge proof of solutions to games of Numberlink and more broadly, of knowledge of k vertex-disjoint paths, but they do not provide a zero-knowledge proof for solutions to games Flow Free or the setting of knowledge of  k vertex-disjoint paths which cover the vertices (they explicitly identify this as a potential avenue for future research). In this paper we provided several extensions to the methodology in \cite{RuangwisesSuthee2020PZPf} that allow us to construct perfectly sound zero-knowledge proofs for Hamiltonian cycles, games of Flow Free, and general many-to-many $k$-disjoint path covers. The main obstacle that we overcome is in dealing with non-simple paths, i.e. a vertex in a path is adjacent to another vertex in the same path but the vertices are not path adjacent; see Figure \ref{fig:FlowFree}. It should also be noted that this paper sits in a young, but growing literature of perfectly sound physical zero-knowledge proofs that also includes Zig-zag Numberlink \cite{RuangwisesSuthee2020PZPf}, Ripple Effect\cite{RUANGWISES2021115}, Sudoku \cite{SASAKI2020135} (this differs from previous non-certain zero-knowledge proofs for Sudoku provided by \cite{10.1007/978-3-540-72914-3_16}), Kakuro \cite{Daiki}, Takuzo and Jousan \cite{miyahara_et_al:LIPIcs:2020:12781}, 
 Nurikabe and Hitori \cite{10.1007/978-3-030-80049-9_37}, among others.

\section{Hamiltonian Cycles}

Methods for providing zero-knowledge proofs for Hamiltonian cycles dates back at least to \cite{blum1986prove} and have been elaborated in other settings such as in the non-interactive case in \cite{Feige1999NonInteractive}, but as far we can tell all zero-knowledge proofs for Hamiltonicity that currently exist in the literature are not perfectly sound. For example Blum's protocol, which seems to be the canonical one, gives probability $2^{-n}$ that the prover does not actually know a Hamiltonian cycle, where $n$ is the number of checks performed. Our protocol can therefore be seen as an improvement on the previous ones by providing a zero-knowledge proof with perfect soundness. 

\subsection{The Protocol}

In what follows, cards are used for indexing matrices and encoding information for edges in the graph. The encoding is done via two suits $\varheartsuit$ (hearts) and $\spadesuit$ (spades). Indexing can be done by using the numerical value of the cards (for our purposes, since the number of indexing cards needs to be large, we assume our deck of cards has relatively large values such as the rare $30$ of clubs.) 

We use an encoding system similar to that found in  \cite{RuangwisesSuthee2020PZPf} to encode numbers. Define $E_k(i)$ to be a sequence of $k$ encoding (marking) cards in a row, with the $i^{\text{th}}$ card being a heart and all other cards being spades. If $i = 0$, then let the entire sequence of cards be spades. For examples $E_3(2) = \spadesuit \varheartsuit \spadesuit$, while $E_4(3) = \spadesuit \spadesuit \varheartsuit \spadesuit$. For the purposes of checking Hamiltonicity we only need to use $E_1(1) = \varheartsuit$, $E_1(0) = \spadesuit$, and $E_n(k)$ for all $0 \leq k \leq n$ where $n$ is the number of vertices in the graph. Cards are often laid out in matrices, and a $0^{\text{th}}$ column or row is always used for indexing cards, when they are needed.

One last important remark is that we often employ scramble shuffles which are essentially methods of uniformly randomly permuting certain rows or columns of a matrix of cards. Our variants are directly inspired by the \textit{double scramble shuffle} in \cite{RuangwisesSuthee2020PZPf} which in turn was inspired by the \textit{pile scramble shuffle} in \cite{Ishikawa2015Scramble}, but similar methods have been used in other physical zero-knowledge proofs in the introduction. Both papers suggest that the employment of envelopes should make such a process physically possible.

\subsubsection{Deployment of Cards}
The first couple of steps explain how a prover encodes a solution. 

\textbf{Step 1}: The prover should privately mark every edge in the Hamiltonian cycle with $E_1(1)$ and every other edge with $E_1(0)$, i.e. they place a single heart face down on every edge that is part of the cycle and a single spade face down on every edge that is not part of the cycle.

\textbf{Step 2}: The prover privately marks every edge with a $E_n(k)$ where $k \in {0,1,\dots,n}$. Physically this means the prover places a face down stack of $n$ cards encoding some number $k$ onto each edge of the graph. This is done in the following way: the prover arbitrarily picks one edge in the cycle and marks it with $E_n(1)$ and marks one of the two edges adjacent to it and also in the cycle with $E_n(2)$. The prover marks the one edge in the cycle which is unmarked and adjacent to $E_n(2)$ with $E_n(3).$ They continue this way, marking adjacent edges of the cycle with consecutive encoded numbers until the last marker $E_n(n)$ is placed. For an actual Hamiltonian cycle, this marker will necessarily be placed in the last unmarked edge of the cycle, next to $E_n(1)$. The edges not in the cycle should be marked with $E_n(0).$  
\subsubsection{Checking the Solution}
The prover and verifier go through a process of checking every vertex publicly.

\textbf{Step 3}: The cards on every edge connected to the vertex currently being checked are used to create a matrix. In column 0 a series of indexing cards $\clubsuit 1, \clubsuit 2, \ldots \clubsuit d$ are placed where $d$ is the degree of the vertex being checked. Now for each edge connected to the vertex that is being checked, in column 1 place face down the single cards, $E_1(1)$ or $E_1(0),$ encoding whether that edge is a part of the Hamiltonian cycle or not. In column 2, place face down the $E_n(k)$ markers, i.e. the stack of cards encoding each edge's number (See Figure \ref{fig:Matrix1}.) Flip the indexing cards face down so that all cards in the matrix are now face down.

\textbf{Step 4}: Perform a row scramble shuffle, which is defined by randomly permuting the $d$ rows of the matrix. Then flip over all cards in column 1. If there are not 2 $ \varheartsuit$'s in the column, the verifier rejects the solution outright and the protocol terminates. If there are 2 $ \varheartsuit$'s in the column, proceed.

\textbf{Step 5}: Call the two rows with hearts row $i$ and row $j$. Take the stacks in column 2 of rows $i$ and $j$ and use them to create a second matrix of cards as follows. Column $0$ should be a column of face down indexing cards $1,2 , \ldots,n$. Column 1 is the stack of cards from the second column of the $i^{\text{th}}$ row of the first matrix. These cards should be laid out face down, one by one, preserving the order meaning that a card in the $k^{\text{th}}$ spot would be placed in the same row as the $k^{\text{th}}$ indexing card. Column 2 is filled the same way but with the stack from the second column and $j^{\text{th}}$ row of the first matrix.  

\textbf{Step 6}: Shift every card of column 2 of the second matrix down one row with the $n^{\text{th}}$ card cycling back to the first row. (See Figure \ref{Matrix2}.)

\textbf{Step 7} : Perform a row scramble shuffle. Flip over all cards in column 1 and 2. If there are 2 hearts in the same row then the verifier should accept this vertex. Either way, turn all cards face down, perform another row scramble shuffle, then turn over the indexing cards in column $0$. Put the rows in correct order using the indexing cards. Then shift all the cards in column 2 up one row, with the card in the first row going to the $n^{\text{th}}$ row.

 \textbf{Step 8}: If the verifier has accepted the vertex, go to step 9. If the verifier has not accepted the vertex, shift every card in column 2 up by 1 row with the card in the first row going to the last row. Turn the indexing cards face down, and perform a row scramble shuffle on the matrix. Flip over the cards in column 1 and 2. If there are not 2 $\varheartsuit$'s in the same row, the verifier should reject the solution entirely and the protocol terminates. If there are 2 $\varheartsuit$'s, the verifier should accept the vertex. Turn all cards face down, perform the row scramble shuffle, flip over column 0 and use the indexing cards to place the rows in their original places, shift all cards in column 2 down a row, with the card in the last row, going to the first row.
 
 \textbf{Step 9}: Reform the stack of cards from which column 1 of the second matrix was made and return this stack to the second column and $i^{\text{th}}$ row of the first matrix. Likewise, reform the stack of cards from which second column of the second matrix was made and return it to second column and  $j^{\text{th}}$ row the first matrix. Now there is only the first matrix. Turn any cards face down and perform a row scramble shuffle.  Finally, flip over the indexing cards in column 0 and use them to re-deploy the marking cards onto the edges from which they came.

\textbf{Step 10}:
Repeat steps 3 through 9 for all vertices on the graph. 

\textbf{Step 11}:
Once the verifier has accepted every vertex, place each of the stacks marking the $E_n(k)$ from every edge of the graph into a single column. The column should have the same length as the number of edges in the graph. Perform a row scramble shuffle. Now reveal all stacks of cards and check that there exists exactly 1 $E_n(i)$ for all $1 \leq i \leq n$. If this is true then the verifier should accept the solution. Otherwise the verifier should reject it.

\begin{figure}[ht]
    \centering
    \includegraphics[width=.5\textwidth]{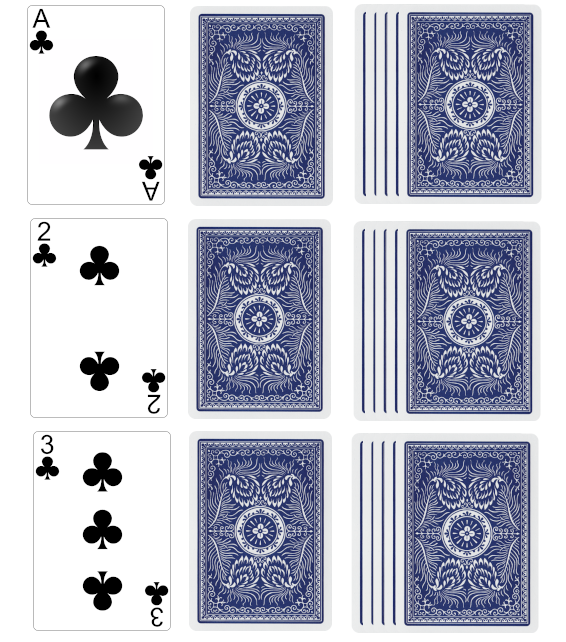}
    \caption{This is an example of a matrix from step 3 of the Hamiltonian Cycle protocol. The $0^{\text{th}}$ column are the indexing cards. The $1^{\text{st}}$ column is a column of single cards pulled from all adjacent edges of the vertex that is currently being checked, which had degree 3. The $2^{\text{nd}}$ column contains the stack of cards from each of the adjacent edges. In this case the graph has $5$ vertices, so each stack of cards has 5 cards.}
    \label{fig:Matrix1}
\end{figure}

\begin{figure}[ht]
    \centering
    \includegraphics{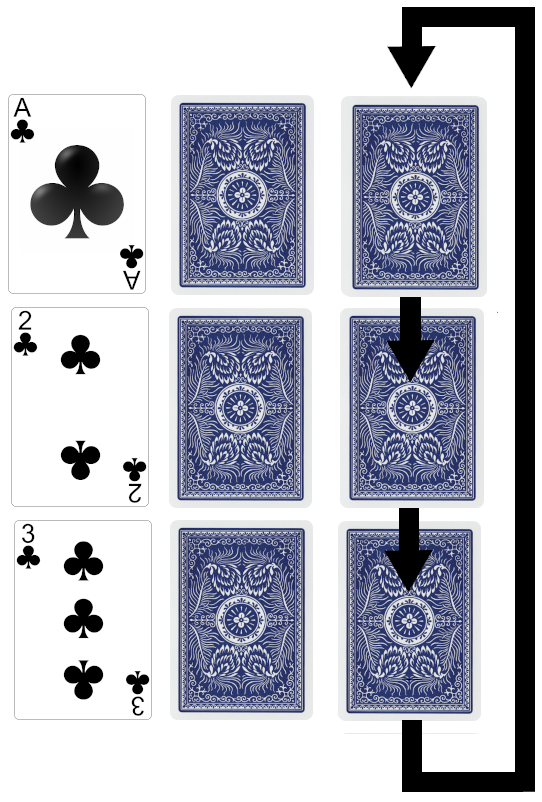}
    \caption{This is an example of a second matrix from step 6 in which it is checked that adjacent edges in the Hamiltonian cycle are encoded with adjacent numbers modulo the number of vertices, which in this case is $3$. (Note this example does not coincide with the example in Figure \ref{fig:Matrix1}, since in that case, the second matrix would have 5 rows.) }
    \label{Matrix2}
\end{figure}

\subsection{Explaining the Steps}

We have split the protocol into steps so as to explain intuitively the reasons behind each step.

\textbf{Step 1 Reason}: The prover needs some way to encode their solutions into the graph, if all the cards were turned face up after this step, one would clearly see the Hamiltonian cycle.

\textbf{Step 2 Reason}: The prover now introduces some seemingly extraneous information to their solution, but this extra information allows the verifier to confirm that the prover has one cycle and not multiple, as we shall see below.

After the first two steps, the prover has privately marked the graph with the solution and the rest of the protocol is done publicly, so this is a good time to introduce the intuition of  the protocol. 

\emph{1. Every vertex must be connected to exactly two edges which are in a cycle.}

\emph{2. A Hamiltonian cycle must contain exactly $n$ edges which occur sequentially one after the other.}

The first condition guarantees that any vertex is apart of a cycle. Two disjoint cycles which cover the graph still satisfy the first condition, but not the second. The protocol is then designed to convey that these first two conditions are met, without revealing the Hamiltonian cycle itself. 

\textbf{Step 3 Reason}: The remainder of the steps are done publicly. Cards are mostly placed face down so as not to reveal too much information to the verifier, but the creation of the matrices should be done publicly so that neither the verifier nor prover can tamper with the cards that prover has submitted as a solution.

\textbf{Step 4 Reason}: This step checks the first condition of the intuition, that every vertex is connected to exactly two edges in the cycle. The scramble shuffle is employed so that the verifier cannot see which edge's encoding cards belong to which columns. 

\textbf{Step 5 Reason}:

The second condition still needs to be checked so the other set of marking cards, the card encoding the edges' numbers, must be checked. This check is prepared by laying out the two stacks into two different columns of a new matrix. Thus the two columns now encode the two numbers associated with the two edges in the Hamiltonian cycle connected to the checked vertex. 

\textbf{Step 6 Reason}:
The idea behind this step is that the numbers marking adjacent edges should have a difference of $1$ modulo $n$.  Therefore there are two possibilities. Either the first column of the second matrix encodes the larger number or the second column does. We start by checking the former, but in order to shuffle the columns and preserve the adjacency of the two numbers, we first attempt to make the numbers equal. If the second column is the smaller number, step 6 makes both columns encode the same number.

\textbf{Step 7 Reason}: If, in step 6, it was indeed true that the second column originally encoded the smaller number, the verifier should see that the two columns encode the same number after the second column was shifted down.  A row scrammble shuffle allows the verifier to check this without the prover revealing the numbers themselves. After this check the cards should be put back in the positions in which they started, in the second matrix.

\textbf{Step 8 Reason}:
In Step 7 Reason, we assumed the second column encoded the smaller number, but if this turns out to be false, it could be the first column encodes the smaller number, so the second column is now shifted up, and a similar check to the one in the previous step is preformed. There is a chance here that the verifier rejects the solution entirely if they do not see adjacent numbers. This could happen if the prover actually is trying to cheat and uses two cycles, thus perhaps must skip numbers along the way, because they do not want to be caught reusing numbers; see the reason for step 10. 

\textbf{Step 9 Reason}:
Step 9 is simply a step for taking all the cards that have been placed in matrices and putting them back onto the graph the way the prover had first set them on the graph in steps 1 and 2.

\textbf{Step 10 Reason}: Every vertex needs to under go the same check.

\textbf{Step 11 Reason}: The final step is a global check. The verifier just needs to check that the prover has not encoded multiple edges with the same number; see Figure \ref{disjoint}. We should note here that the vast majority of the cards used are actually "decoys" encoding $E_n(0)$ and placed on edges which are not apart of the Hamiltonian cycle.

\begin{figure}
    \centering
    \includegraphics[width=\textwidth]{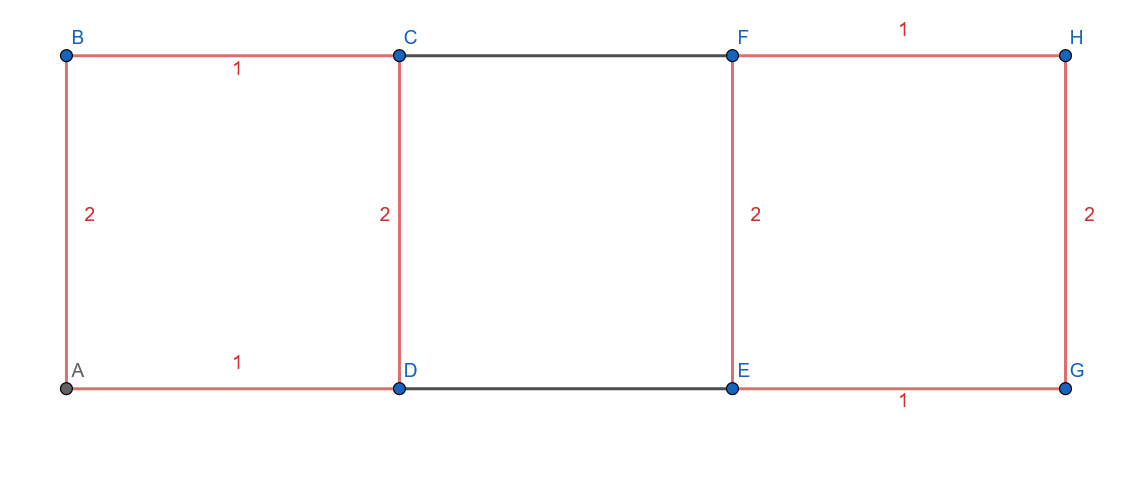}
    \caption{The graph above is showing the encoding of a non-Hamiltonian cycle that would escape local vertex checks. In the example, the prover has encoded two cycles with alternating $E_{8}(1)$'s and $E_{8}(2)$'s. The global check however would reveal that $E_{8}(1)$ and $E_{8}(2)$ occur multiple times and no other $E_{8}(k)$ occurs for $k\neq 0,1,2$. Thus the verifier would reject.}
    \label{disjoint}
\end{figure}

\subsection{Proofs of Security}

\begin{lemma}{Perfect Completeness:}
If the prover has a Hamiltonian cycle, then the verifier always accepts the solution.
\end{lemma}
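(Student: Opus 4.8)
The plan is to follow an honest prover — one who really possesses a Hamiltonian cycle and encodes it exactly as prescribed in Steps 1--2 — through the protocol and check that none of the rejection conditions (in Steps 4, 8, and 11) is ever triggered. First I would record the invariant set up in Steps 1--2: every cycle edge carries a single $\varheartsuit$ together with a stack $E_n(k)$ for some $k\in\{1,\dots,n\}$, the labels $k$ running consecutively modulo $n$ around the cycle, while every non-cycle edge carries $\spadesuit$ and $E_n(0)$. Since in a Hamiltonian cycle each vertex meets exactly two cycle edges, column~1 of the first matrix built in Step~3 for any vertex contains exactly two $\varheartsuit$'s and $d-2$ $\spadesuit$'s; the row scramble shuffle in Step~4 only permutes rows, so it preserves this count and the Step~4 check always passes, identifying the two rows $i,j$.

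Next I would analyze Steps 5--8. Let the stack placed in column~1 of the second matrix have its $\varheartsuit$ in position $a$ and the stack placed in column~2 have its $\varheartsuit$ in position $b$; these are the labels of the two cycle edges at the checked vertex, so by the consecutiveness invariant either $a\equiv b+1$ or $b\equiv a+1 \pmod n$. The cyclic downward shift in Step~6 moves column~2's $\varheartsuit$ to position $b+1\bmod n$. If $a\equiv b+1\pmod n$, the two $\varheartsuit$'s now lie in a common row, so — the Step~7 row scramble shuffle merely permuting rows — the Step~7 check finds two hearts in the same row and the vertex is accepted. If instead $b\equiv a+1\pmod n$, the Step~7 check fails, but its cleanup (a row scramble shuffle, reordering by the column-$0$ indices, then an upward shift of column~2) restores column~2's $\varheartsuit$ to position $b$; the upward shift in Step~8 then moves it to position $b-1\equiv a\pmod n$, so the Step~8 check now finds two hearts in the same row and the vertex is accepted. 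In neither branch is the solution rejected outright, and the closing shifts/shuffles of Steps 7--8 and all of Step~9 are the inverses of the operations that preceded them, so every edge's marking cards are returned to their original face-down configuration and the induction proceeds to the next vertex in Step~10.

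Finally, for Step~11 I would observe that, for an honest prover, the multiset of stacks collected from all edges consists of exactly one copy each of $E_n(1),\dots,E_n(n)$ (there being exactly $n$ cycle edges in a Hamiltonian cycle on $n$ vertices) together with $E_n(0)$ on every remaining edge; the row scramble shuffle does not change this multiset, so upon revealing the column the verifier sees precisely one $E_n(i)$ for each $1\le i\le n$ and accepts. Combining this with the fact that every per-vertex check in Steps 4--8 passes, the verifier accepts the solution.

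The main obstacle I anticipate is the purely mechanical but error-prone bookkeeping in Steps 6--8: correctly tracking the position of each $\varheartsuit$ through the successive downward and upward cyclic shifts, and verifying that the Step~7 cleanup genuinely restores the pre-Step-6 state so that the Step~8 shift acts on the right configuration. Everything else follows directly from the two structural facts about Hamiltonian cycles — every vertex has cycle-degree two, and the cycle has exactly $n$ edges, which the honest prover labels consecutively modulo $n$.
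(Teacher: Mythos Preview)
Your proposal is correct and follows essentially the same approach as the paper's own proof: verify that Step~4 passes because each vertex meets exactly two cycle edges, split Steps~6--8 into the two cases depending on which of the two incident cycle-edge labels is the larger modulo $n$, and conclude with the global count in Step~11. Your version is simply more explicit than the paper's about tracking the $\varheartsuit$ positions through the cyclic shifts and about the restorative bookkeeping between steps, which the paper leaves implicit.
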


\begin{proof}
If the prover provides an actual Hamiltonian Cycle and deploys cards in the manner described in the first two steps, then every vertex has exactly two edges that are marked with a heart, and therefore the verifier approves step 4. If the prover provides an actual Hamiltonian cycle, then the two edges adjacent on any given vertex are encoded by consecutive numbers mod $n$, i.e. the stacks in the second column of rows $i$ and $j$ are $E_n(k)$ and $E_n(k-1)$ or $E_n(k-1)$ and $E_n(k)$. If the first is the case, then the shift down in step 6 puts the hearts for the encoding cards in the same row and the verifier accepts on step 7. If the latter is the case, then once everything has been put back in its original place, the shift up in step 8 puts the two hearts in the same column and the verifier accepts in step 8.

If the prover has provides an actual Hamiltonian cycle the verifier accepts for every vertex, and the protocol moves on to step 11. If the prover provides a valid Hamiltonian cycle, then there is exactly 1 $E_n(i)$ for every $i \in \{1, \dots , n\}$ and the verifier accepts on step 11 and therefore accepts the whole solution.
\end{proof}

\begin{lemma}{Perfect Soundness:}
If the prover does not know a Hamiltonian Cycle, then the verifier always rejects.
\end{lemma}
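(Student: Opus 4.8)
The plan is to prove the contrapositive in its strongest form: if there is \emph{any} run of the protocol in which the verifier accepts, then the cards the prover committed in Steps~1--2 already encode a Hamiltonian cycle of $G$, so a prover who is accepted necessarily knows one. The first step is to remove the randomness. The outcome of each public check depends only on the committed configuration $(\mathcal M,\mathcal L)$ --- the single marker $\mathcal M(e)\in\{\varheartsuit,\spadesuit\}$ and the length-$n$ stack $\mathcal L(e)$ on each edge $e$ --- and not on the shuffle outcomes: Step~4 only counts hearts in a column, Steps~7--8 only ask whether some row holds two hearts, and Step~11 only inspects the multiset of revealed stacks, and every one of these is invariant under row and stack permutations. (Malformed submissions --- stacks of the wrong length, or cards of an illegal suit --- are publicly visible and rejected at once, so I may assume each edge carries one marker and one length-$n$ stack.) Hence ``the verifier can accept'' is equivalent to the deterministic event that $(\mathcal M,\mathcal L)$ passes Steps~4 and 7--8 at every vertex and passes Step~11, and it suffices to analyze that event.

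Next I would read off the combinatorial content. Write $H=\{e:\mathcal M(e)=\varheartsuit\}$. Passing Step~4 at every vertex says each vertex is incident to exactly two edges of $H$, so $H$ is a spanning $2$-regular subgraph, i.e.\ a vertex-disjoint union of cycles $C^{(1)},\dots,C^{(m)}$ covering $V$; in particular $|H|=|V|=n$, and since $G$ is simple every $C^{(t)}$ has at least three edges. Passing Step~11 says the revealed stacks are exactly one copy of each $E_n(i)$, $1\le i\le n$, together with $|E|-n$ decoys $E_n(0)$; thus exactly $n$ edges carry a nonzero label, these are pairwise distinct, and every stack contains at most one heart. Passing Steps~7--8 at a vertex $v$ forces each of the two $H$-edges at $v$ to have a stack containing a heart, since an all-spades stack can never yield a row with two hearts; hence every edge of $H$ has a nonzero label, and as $|H|=n$ equals the number of available nonzero labels, the map $a : H\to\{1,\dots,n\}$ sending $e$ to the position of the unique heart of $\mathcal L(e)$ is a bijection, which I identify with $\mathbb Z/n$ (so $E_n(n)\leftrightarrow 0$). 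Finally, the shift-down and shift-up of column~2 in Steps~6--8 implement adding $+1$ and $-1$ modulo $n$ to the heart position (the $n$th slot cycling to the first), so passing Steps~7--8 at $v$ says precisely that the two $H$-edges at $v$ have labels differing by $\pm1$ in $\mathbb Z/n$.

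The conclusion is then short. Fix a cycle $C^{(t)}$ with edges $f_1,\dots,f_\ell$ in cyclic order, $\ell\ge3$; each pair $f_k,f_{k+1}$ shares a vertex and is exactly that vertex's pair of $H$-edges, so $a(f_{k+1})-a(f_k)\equiv\pm1\pmod n$ for all $k$ (indices mod $\ell$). Thus $a(f_1),\dots,a(f_\ell)$ is a closed walk of length $\ell$ in the cycle graph on vertex set $\mathbb Z/n$ (with $i\sim j$ iff $i-j\equiv\pm1$) that visits $\ell$ \emph{distinct} vertices, by injectivity of $a$; but a closed walk of length $\ell\ge3$ whose vertices are all distinct is a simple cycle, and the only simple cycle in an $n$-cycle is the whole $n$-cycle, so $\ell=n$. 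Therefore $m=1$ and $H=C^{(1)}$ is a single cycle through all $n$ vertices --- a Hamiltonian cycle of $G$ --- which the prover, having laid these cards down, knows.

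I expect the real work to be the bookkeeping in the middle paragraph rather than the final count: one must verify that the two hearts detected in Steps~7--8 really are the markers of the two $H$-edges at the vertex under examination, that the column shifts genuinely act as $\pm1$ on $\mathbb Z/n$ including the wrap $n\leftrightarrow1$, and that Step~11 together with the well-formedness convention truly excludes every stack with zero or $\ge2$ hearts --- it is exactly this global count that destroys the two-cycle configuration of Figure~\ref{disjoint}, which survives all the local vertex checks. A smaller but essential point, since ``perfect'' soundness demands rejection with probability exactly $1$, is the opening reduction: one must confirm that no choice of shuffle outcomes can flip any accept/reject decision.
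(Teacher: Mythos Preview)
Your proof is correct and follows essentially the same route as the paper's: prove the contrapositive, use the Step~4 checks to see that the heart-marked edges form a vertex-disjoint union of cycles covering $V$, and then use the global distinctness of the $E_n(k)$ labels together with the local $\pm1\pmod n$ constraint to force a single cycle of length $n$. Your version is more careful than the paper's in two respects --- you explicitly argue that the accept/reject outcome is shuffle-invariant (so ``always rejects'' really means probability exactly $1$), and you spell out why the two $H$-edges at each vertex must carry nonzero labels --- and your closed-walk-in-$C_n$ phrasing of the final count is a clean formalization of the paper's sentence ``this cycle cannot have less than $n$ edges because that would require the prover to use a number in $\{1,\dots,n\}$ multiple times.''
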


\begin{proof}
We prove the contrapositive. Assume that the verifier accepts. Pick an arbitrary vertex $v_1$. We say that two vertices are path adjacent, if the prover has marked the edge between them with a heart. The verification process confirms that each vertex, is path adjacent to two other vertices, so let $v_{2}$ be path adjacent to $v_1$ and in turn $v_{3}$ to  $v_{2}$ and so on, so that we may list path adjacencies like 
\[v_{1},v_{2},v_{3}, \dots\].

Since the graph has a finite number of vertices, this either terminates, meaning we get to a vertex with only one path adjacent vertex, or it goes in a cycle. The first case cannot happen if the verifier has accepted, because a vertex without exactly 2 path adjacencies would not be accepted. Thus this is a cycle provided by the prover. However, we still need to show that every vertex appears in the cycle.

 Note that edges must be encoded with numbers in $\{0,1,2,\dots,n\}$ with numbers in $\{1, \dots ,n\}$ being used exactly once. The global check, at the end of the verification process, ensures that this is the case. Let us consider the edges from the cycle above as
\[e_{1,2}, e_{2,3}, \dots.\] The verification process also checks that these edges are encoded with consecutive numbers modulo $n$. Thus this cycle cannot have less than $n$ edges because that would require the prover to use a number in $\{1,\dots n\}$ multiple times or for the some edges to not be encoded with consecutive numbers. Therefore the cycle indeed contains all the vertices.

\end{proof}

\begin{lemma}{Zero-knowledge:}
During the verification process, the verifier learns nothing about the prover's solution.
\end{lemma}

\begin{proof}
The backs of the cards are identical, so the only time that the verifier can learn any information about the solution is when the cards are face side up. Therefore the only steps in which the verifier could potentially get information about the solution are 4, 7, 8 and 11. 

The verifier cannot get information at step 4 because of the shuffle in step 3. The probability of the two hearts landing in a pair of rows is uniformly distributed, and thus, from the verifier's perspective, all pairs of edges are equally likely to be the edges of the Hamiltonian cycle. 

Similarly, the shuffles prevent the verifier from getting information at steps 7, 8, or 11. The shuffles hide the number with which each edge is encoded by making all numbers uniformly probable. 

Finally note that index cards are only turned face up after all cards have been turned face down and shuffled, thus preventing the verifier gleaning any information from them.
\end{proof}

(If you are trying this with an actual deck of cards, the added information of the value on the marking cards as opposed to suit does leak information to the verifier, but if you are dealing with a stripped down system where the suit is only information that is conveyed, say by making every marking card an ace, then no information is leaked.)

\section{Flow Free}
The rules of Flow Free on are simple. To explain, we call each square in a Flow Free grid a cell. We call the edges between the cells walls. In a rectangular grid, all cells besides the ones at the corner have four walls, and each wall is the wall of two different cells. Two cells are adjacent to one another if they share a wall. Two walls are adjacent to one another if they are walls of the same cell. There is a special type of cell, called a terminal cell. Terminal cells are given at the outset of the game and come in pairs. The pairs are labeled with colors or numbers and the player's goal is to connect the cells within all pairs via a path. A path contains walls and cells through which it passes. In a solution, walls and cells are only every contained in at most one path. Two walls are path adjacent if they are adjacent and contained in the same path. A solution path for a pair of terminal cells must thus pass through adjacent walls starting at the wall of one of the terminal cells and ending at the wall of the other. Finally every cell must be contained in exactly one solution path; see Figure \ref{fig:FlowFree}.

\subsection{The Protocol}

\subsubsection{The Prover's Card Deployment}
 Let $k$ be the number of pairs of given terminal cells in an $m \times n$ grid. First the prover should privately place stacks of cards on each wall in the manner described below. Once the solution is encoded, there will be two stacks of cards on each wall, so call this first stack, type I. Type I stacks encode the $E_{k}(j)$. $E_k(j)$ should be placed on every wall through which the path between the $j^{\text{th}}$ pair of terminal cells passes, and if no path passes through a given wall then the prover should place $E_k(0)$. The prover then needs to privately place a second stack of cards on each wall. These are type II stacks, which encode the $E_{(n \times m) -k}(i)$. For walls that are not in a solution path, the prover places $E_{(n \times m)-k}(0)$. The prover should also pick a wall of a terminal cell which is in a path, onto which, they place a second stack $E_{(n \times m)-k}(1)$. On the wall that is path adjacent to this one, the prover should place $E_{(n \times m)-k}(2)$, and so on until the path path terminates. On the final wall in this path, which in a correct solution is the wall of the another terminal cell, the prover places $E_{(n \times m)-k}(i)$, where $i$ would be the number of walls contained in the path. For the next terminal cell wall of a path not yet encoded by a type II stack of cards, the prover performs the same steps as were done on the previous path but starts by encoding the first wall with $E_{(n \times m)-k}(i+1).$ The second wall should be encoded with $E_{(n \times m)-k}(i+2)$ and so on. This process is repeated until all walls are encoded by two stacks of cards. We note that the prover can do this by putting down the stacks of cards in any order so as not to accidentally give away anything by doing the steps above in the order they are written i.e. this is done privately. See Figure \ref{fig:deployed_cards}.
 
 \begin{figure}
     \centering
     \includegraphics[width=\textwidth]{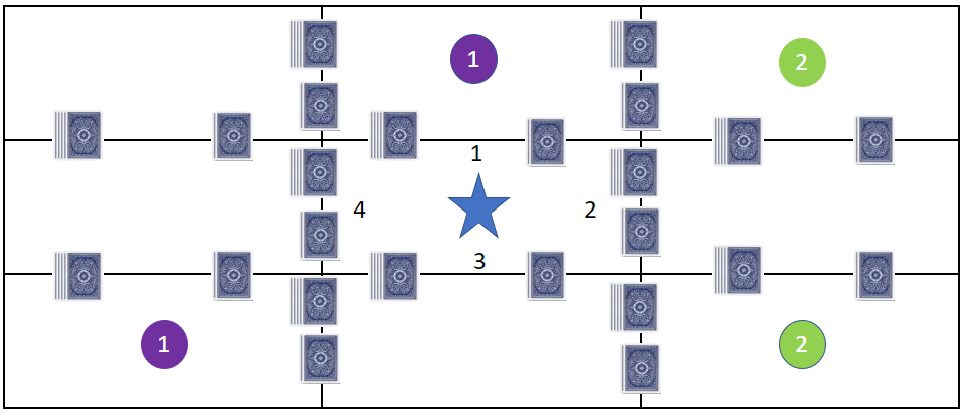}
     \caption{This is an example of a board with cards deployed. On each wall, two stacks of cards have been placed. In this cases, the smaller stack has only two cards, since $k=2$. The larger stack has seven cards which comes from $m \times n -k=3 \times 3 -2$. The star in the center indicates the cell currently being checked.}
     \label{fig:deployed_cards}
 \end{figure}

\subsubsection{Checking the Terminal Cells}

First we must check the terminal cells publicly. It is already public knowledge that this cell corresponds to the $i^{\text{th}}$ path. 

\textbf{Step 1}: Take the type I stacks from all walls that are adjacent to a given terminal cell and use them to construct a matrix. Suppose it is a terminal cell from the $i^{\text{th}}$ pair. Each of the rows is constructed by placing the cards from each of the stacks face down and maintaining the order of the stack meaning a card in the $i^{\text{th}}$ position of the stack ends up in the $i^{\text{th}}$ column of the matrix. Each stack should have its own row and indexing cards should be placed in the $0^{\text{th}}$ column  to keep track of which row corresponds to which wall of the terminal cell. 

\textbf{Step 2}: Turn the indexing cards face down and perform a row scramble shuffle. 

\textbf{Step 3}: Turn over the the cards in columns 1 through $k$. If the verifier sees exactly 1 $\varheartsuit$ in the $i^{\text{th}}$ column and no $\varheartsuit$'s anywhere else, the verifier accepts this cell. Otherwise the verifier rejects the whole solution. 

\textbf{Step 4}: Turn all cards face down perform a row scramble shuffle, flip over the indexing cards and use them to redeploy the stacks of cards on the board. 

\textbf{Step 5}: The verifier repeats the process for all terminal cells. If the verifier accepts for all terminal cells, then move on to checking the non-terminal cells.

\subsubsection{Checking the Non-terminal Cells}
Next the non-terminal cells are checked.

\textbf{Step 1}: Pick a non-terminal cell. Construct a matrix such that row 0 and column 0 are indexing cards. Now take the type I stack from one of the adjacent walls, placing the cards from this stack face down to form row 1. As usual ordering should be maintained so that a card in the $i^{\text{th}}$ position in the stack ends up in the $i^{\text{th}}$ column. Then place the entire type II stack in its own entry in the first row at the end, in column $k+1$. Repeat this process for each adjacent cell wall, adding a new row to the matrix each time. Let $r$ be the number of walls associated with this cell. The largest row index is thus $r$. See \ref{fig:Matrix3}.

\textbf{Step 2}: Turn all indexing cards face down and randomly permute rows $1$ though $r$. Then randomly permute the columns $1$ through $k$. 

\textbf{Step 3}: Flip over all the cards that are in both columns $1$ through $k$ and rows $1$ through $r$. If there are exactly 2 $\varheartsuit$'s in the same column and no other $\varheartsuit$'s, the verifier should accept and move to the next step. Otherwise the verifier should reject the solution.

\textbf{Step 4}: If the verifier accepts the previous step, then take  the type II stacks in the $k + 1$ column from the two rows containing the hearts and create a second matrix by deploying them in two columns in the same way it was done in step 5 of the Hamiltonian cycle protocol, but note, the number of rows in the new matrix is $n \times m -k$ instead of $n$.

\textbf{Step 5}: Perform steps 6-8 of the cycle protocol including the relevant acceptances or rejections.

\textbf{Step 6}: If the solution is not rejected in step 5, turn the columns of the second matrix back into the original stacks and put the stacks back into the first matrix, just as was done in the cycle checking protocol

\textbf{Step 7}: Turn all cards face down and randomly permute rows $1, 2, \ldots, r$ and and then the columns $1, 2, \ldots, k$. 

\textbf{Step 8}: Flip over the indexing cards in row 0 and use this to put the columns back in the right order.

\textbf{Step 9}: Flip over the indexing cards in column 0 and use this to redeploy the stacks of cards onto the correct walls as they originally were.

\textbf{Step 10}: Repeat the process for every non-terminal cell. If it is accepted for every cell, move to checking the global properties.

\begin{figure}
    \centering
    \includegraphics[width=.5\textwidth]{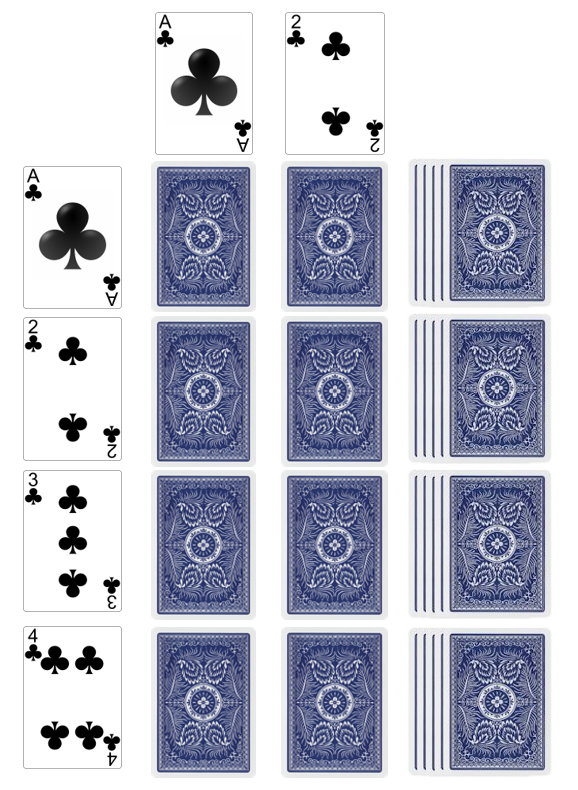}
    \caption{This is the first matrix of cards that should be formed in checking the non-terminal cell marked with a star in Figure \ref{fig:deployed_cards}. The type I stacks from the four adjacent walls have been laid out so that the $i^{th}$ card in the stack from $j^{th}$ wall is in the $(i,j)$ position. The last column contains the type II stacks. (There appear to be only five cards in each of the type II stacks, but in the example there would be $3 \times 3 -2=7$.)  }
    \label{fig:Matrix3}
\end{figure}

\subsubsection{Checking global properties}
A final global check is needed.

\textbf{Step 1}: Take the type II stacks from every wall in the puzzle. Line them up face down so that each stack is laid out in a column and perform a column scramble shuffle. 

\textbf{Step 2}: Turn all cards face up and check that there is exactly 1 heart in each row. If there is, then the verifier should accept the solution to the whole puzzle. If not, the verifier should reject the solution. 

\subsection{Proof of Security}

\begin{lemma}{Perfect Completeness:}
If the prover has a solution to a Flow Free Puzzle, the verifier always accepts the solution.
\end{lemma}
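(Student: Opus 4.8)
The plan is to follow the template of the proof of perfect completeness for the Hamiltonian cycle protocol, but to split the argument into three parts matching the three phases of the Flow Free protocol: the terminal-cell checks, the non-terminal-cell checks, and the final global check. Throughout, I assume the prover holds a genuine solution and deploys the type I and type II stacks exactly as prescribed, so that it suffices to verify that each reveal in the protocol produces the pattern the verifier is told to accept.

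First, the terminal-cell checks. In a valid solution the terminal cell of the $i^{\text{th}}$ pair is an endpoint of the $i^{\text{th}}$ path, so exactly one of its adjacent walls lies on a path (the first wall of that path), carrying the type I stack $E_k(i)$, while every other adjacent wall carries $E_k(0)$. Hence the matrix built in Step 1 has a single $\varheartsuit$, in column $i$; the row scramble shuffle of Step 2 does not move it out of column $i$, so the reveal in Step 3 shows exactly one $\varheartsuit$ in column $i$ and none elsewhere, and the verifier accepts. This applies to every terminal cell. Second, for the non-terminal-cell checks the key structural observation is that if a non-terminal cell $c$ is covered by path $j$, then exactly two of its walls lie on a path, both on path $j$ (the wall by which the path enters $c$ and the one by which it leaves), and no wall of $c$ can lie on any other path $j'$, since a wall on path $j'$ would force $c$ itself to lie on path $j'$, contradicting that cells belong to at most one path. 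Consequently the first matrix of the non-terminal check has exactly two $\varheartsuit$'s, both in the column indexed by $j$, and the row and column shuffles of Step 2 preserve this pattern, so Step 3 is accepted. The type II stacks of the two $\varheartsuit$-bearing rows encode two path-adjacent walls, which by the prescribed deployment carry consecutive labels $E_{nm-k}(t)$ and $E_{nm-k}(t+1)$ for some $t$; this is exactly the situation handled by Steps 6--8 of the cycle protocol (with modulus $nm-k$ in place of $n$), where either the shift down or the shift up places the two $\varheartsuit$'s in a common row, so the verifier accepts. Since every non-terminal cell lies on exactly one path, this argument is uniform.

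Finally, for the global check I will use the counting identity that, because the $k$ paths partition the $nm$ cells and a path through $\ell$ cells uses $\ell-1$ walls, the total number of walls lying on paths equals $\sum_i(\ell_i-1)=nm-k$; the deployment assigns these walls the labels $1,2,\dots,nm-k$, each exactly once, and assigns $E_{nm-k}(0)$ to every remaining wall. Laying the type II stacks out as columns of height $nm-k$, the stack labelled $i$ contributes its unique $\varheartsuit$ to row $i$ and the $E_{nm-k}(0)$ stacks contribute none, so each row contains exactly one $\varheartsuit$; the column scramble shuffle leaves the multiset of $\varheartsuit$'s in each row unchanged, and Step 2 is accepted, so the verifier accepts the whole solution.

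The main obstacle I expect is not any single computation but making the ``no foreign path passes through a non-terminal cell'' observation airtight, and confirming that Steps 6--8 of the cycle protocol behave correctly when the labels along a path are genuinely consecutive integers rather than residues of a cycle --- in particular that the difference is always exactly $1$ (never $nm-k-1$), so that one of the two shifts necessarily succeeds. I also want to record that the non-simplicity of solution paths emphasized in the introduction causes no trouble for completeness: a wall between two grid-adjacent but not path-adjacent cells of the same path simply receives the $0$-label under both stack types, which is consistent with every check above; non-simplicity only bites in the soundness argument.
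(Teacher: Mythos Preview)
Your proof is correct and follows essentially the same three-phase structure as the paper's own argument (terminal cells, non-terminal cells, global check), with the same key observations at each stage. Your treatment is in fact more thorough: you spell out the counting identity $\sum_i(\ell_i-1)=nm-k$ that the paper merely asserts, and you explicitly justify why a non-terminal cell's walls cannot lie on two different paths, whereas the paper leaves both points implicit.
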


\begin{proof}
If the prover provides a card deployment that corresponds to a solution, then there is exactly 1 $\varheartsuit$ in the right column in step 3 of each terminal cell check. Therefore the verifier accepts the terminal cells and moves on to the non-terminal cells. 

Each non-terminal cell must have 2 adjacent walls with stacks of type I that have $\varheartsuit$'s in the same column at deployment, while the other walls should have stacks of type I with no heart. These facts are not changed by either the random permutation of the rows or columns, therefore the verifier accepts step 3 of the non-terminal cell check. The two stacks of type II, corresponding to the walls with the stacks of type I with hearts, should encode adjacent numbers i.e. $E_{m \times n -k}(j)$ and $E_{m \times n -k }(j+1)$ therefore the verifier accepts during step 5 for all non-terminal cells.

Checking the global properties, the verifier accepts in step 2 because the prover has deployed one stack $E_{n \times m -k}(j)$ for each $j \in \{1, \ldots n\times m - k\}$, since this is exactly the number of walls that are occupied by a path, in a valid solution.  Therefore if presented with a valid solution the verifier always accepts. 

\end{proof}

\begin{lemma}{Perfect Soundness:}
If the prover does not know a solution to a Flow Free Puzzle, then the verifier always rejects.
\end{lemma}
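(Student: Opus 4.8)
The plan is to prove the contrapositive, following the template of the Hamiltonian soundness proof: assuming the verifier accepts every check, I will show that the prover's private type I markings already encode a legal Flow Free solution. Every assertion will be read off what the public checks force on the submitted cards, with no assumption that the prover deployed well-formed $E_k$ or $E_{nm-k}$ stacks. Call a wall \emph{colored $j$} if its type I stack shows a heart in slot $j$ (so $j\ge 1$) and \emph{uncolored} otherwise, and form the graph $G$ whose vertices are the cells and whose edges are the colored walls, each joining the two cells it separates.

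First I would extract the local structure. The terminal-cell check forces each terminal cell of the $i$th pair to be incident to exactly one colored wall, and that wall is colored $i$; the non-terminal-cell check (its Step 3) forces each non-terminal cell to be incident to exactly two colored walls, and these have a common color. Hence every vertex of $G$ has degree $1$ (terminal cells) or $2$ (non-terminal cells) and none has degree $0$, so $G$ is a disjoint union of simple paths and cycles covering all $nm$ cells, with the color constant along each component (consecutive edges of a component meet at a degree-$2$, hence non-terminal, vertex). Matching paths to pairs is then routine: the endpoints of a path component are degree-$1$ cells, i.e. terminal cells, and such a cell's pair index equals the component's color; since each pair has exactly two terminal cells, each lying on a path component, there is exactly one path component per color, joining the correct pair, and exactly $k$ path components in all. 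The one thing that can still go wrong is a cycle component.

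The heart of the proof is ruling those out with the type II machinery. The global type II check forces exactly $nm-k$ hearts among the type II stacks, one per row, hence a bijection from the type-II-heart walls to $\{1,\dots,nm-k\}$; call the image of a wall its \emph{number}. Suppose $Z$ is a cycle component with walls $u_1,\dots,u_\ell,u_1$ in cyclic order. Every cell of $Z$ is non-terminal, and there the non-terminal check (its Steps 4--5) runs Steps 6--8 of the cycle protocol on the type II stacks of the two walls of $Z$ meeting at that cell; acceptance forces both stacks to contain a heart (otherwise both shift-and-compare tests fail and the verifier rejects) and forces their numbers to differ by $1$ modulo $nm-k$. Thus the numbers of $u_1,\dots,u_\ell$ form a closed walk through $\ell$ distinct vertices of the cycle graph $C_{nm-k}=\mathbb Z/(nm-k)\mathbb Z$; but the only cycle inside $C_{nm-k}$ is the whole graph, so $\ell=nm-k$, and $Z$ alone then contains all $nm-k$ numbered walls and covers $nm-k$ cells. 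The $k$ path components must then cover the remaining $k$ cells, yet each contains at least its two distinct endpoints, so they cover at least $2k>k$ cells --- a contradiction. Hence $G$ has no cycle component, so $G$ consists of exactly $k$ vertex- and edge-disjoint simple paths, the $i$th joining the two pair-$i$ terminal cells and together covering every cell; reading each path off as a sequence of cells joined through shared walls yields the solution the prover must therefore know.

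The main obstacle is this cycle-exclusion argument: one must carefully translate the modular shift-and-compare tests (Steps 6--8 of the cycle protocol, now run on the larger modulus $nm-k$) into the clean statement that the numbers around a putative cycle form a closed $\pm 1$ walk with distinct entries in $C_{nm-k}$, and then invoke the elementary fact that a cycle graph contains no proper cycle to force that component to absorb every numbered wall, which the cell count forbids. Everything else --- the degree and color bookkeeping in $G$, and edge cases such as a colored wall always joining two genuine cells and $nm\ge 2k$ giving $nm-k\ge 1$ --- is immediate from the grid geometry and the distinctness of the $2k$ terminal cells.
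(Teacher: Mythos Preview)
Your proof is correct and follows the paper's contrapositive strategy: use the terminal and non-terminal cell checks to force the degree-$1$/degree-$2$ structure, read off the path components colour by colour, and then kill any cycle component using the type~II numbering together with the global check. The one substantive difference is the direction of the final contradiction. The paper first bounds the length of a putative cycle $Z$ by observing that $Z$ meets only non-terminal cells, so $\ell\le nm-2k<nm-k$, and then argues that a cycle of length strictly less than the modulus whose consecutive edges carry $\pm 1$-adjacent numbers must repeat some number, contradicting the global check. You instead start from the global check (distinct numbers), deduce that the cyclic $\pm 1$ walk in $C_{nm-k}$ must be the whole cycle, forcing $\ell=nm-k$, and then obtain the contradiction from the cell count ($k$ cells left over but $k$ paths needing $2k$ endpoints). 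These are dual readings of the same inequality $\ell\le nm-2k<nm-k$, and your formulation has the advantage of making precise the step the paper leaves informal (``such a cycle must repeat a number e.g.\ $1,2,3,2,\dots$''), by invoking the clean fact that a cycle graph contains no proper cycle.
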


\begin{proof}
We prove the contrapositive. Assume that the verifier accepts. A proper solution has two main objectives: connect each pair of terminal points via a path and fill the grid so that each cell is contained in exactly one path which terminates at terminal cells.

We say a terminal cell belongs to a path if it is one of the terminal cells of the $i^{\text{th}}$ pair,  and it has exactly one wall in a path, and that wall is in path $i$. The check on terminal cells confirms that all terminal cells belong to the correct path. We say that a non-terminal cell belongs to a path $i$ if it has exactly two walls in a path $i$, and its other walls are not in a path. The check on non-terminal cells, confirms that each non-terminal cell belongs to a possibly non-terminating path. These facts guarantee that all paths that do terminate, do so at terminal cells. If a path terminates at a non-terminal cell, the non-terminal cell would not get past the verification process. Furthermore, each terminal cell is in a path that terminates at another terminal cell in the same path. 

If a path does not terminate, it must be a cycle and it must not contain terminal cells. Therefore any cycle has strictly less than $n \times m-k$ walls. However the verification process checks that path adjacent walls are encoded with consecutive numbers modulo $n \times m -k$, so such a cycle must repeat a number e.g. $1, 2, 3, 2,...$, but the global check would reject repeated numbers, so there must not be any cycles. Thus the prover does indeed have a valid solution, since this shows that all non-terminal cells belong to paths that terminate at the correct terminal cells.

(Note, the underlying graph structure (walls = edges, cells =vertexes) of any Flow Free game is simple graph so it is impossible to have a 2 cycle.)
\end{proof}

\begin{lemma}{Zero-knowledge:}
During the verification process, the verifier learns nothing about the prover's solution.
\end{lemma}
 
\begin{proof}

The only times that the verifier can learn about prover's solution is when the cards are flipped face up, i.e. at steps 3 and 4 of the terminal cell check, steps 3, 5, 8, and 9 of the non-terminal cell check and step 2 of the global check. The revealing of indexing cards never gives information because a random permutation with all cards face down is always performed before hand.

Step 3 of the terminal cell check gives no retrievable information about which wall is used because of the random permutation during step 2. The information about which path it corresponds to is retrievable, but that is already public information so no information is given. 

Step 3 of the non-terminal cells gives no retrievable information because of the random permutations in step 2. Step 5 gives no retrievable information for the reasons provided in the Hamiltonian graph section. Steps 8 and 9 give no retrievable information because the random permutations in step 7 strips the information provided in step 3.

Step 2 of the global check gives no information because the random permutation in step 1 strips the information of which stack corresponds to which wall.

\end{proof}

\section{Many-to-Many k-Disjoint Path Cover and Conclusion}

The many-to-many k-disjoint path cover problem studied in \cite{KRONENTHAL201714},\cite{ParkJH2006Mdpc},  \cite{PARK2015168}, \cite{Park2021ToruslikeGA}, \cite{ZHANG2013103}, looks at breaking a graph into $k$ disjoint paths, with given starting and ending vertices (sources and sinks) such that every vertex is in exactly one path. We say it is paired if each source is required to be in the same path as a specified sink, otherwise it is unpaired. Finding a Hamiltonian path with given starting points is a special case of the many-to-many k-disjoint path cover problem, where $k = 1$. Flow Free puzzles are also special cases of the problem if we look at the underlying adjacency graph of the board. If we simply replace cell with vertex and wall with edge, our protocol generalizes to providing a zero-knowledge proof for the paired many-to-many k-disjoint covering path problem for simple graphs and we can also handle the unpaired many-to-many k-disjoint covering path problem by including a column scramble when checking all of the sink vertices.

The slight variation for Flow Free rules meant that the protocol described in \cite{RuangwisesSuthee2020PZPf} could not be neatly extended and so substantial changes were needed as well as an introduction of seemingly extraneous information. This mean that our protocol required orders of magnitude more cards. It uses $\Theta((mn)^2)$  cards on an $m \times n$ grid with $k$ pairs of terminal cells. The $k$ does not appear since although a stack of $k$ cards is needed on each wall, a stack of $m\times n -k$ is needed also needed on each wall. Generalizing this to graphs we would have $\Theta(|E| |V|)$ cards for both the Hamiltonian cycle and many-to-many $k$-disjoint covering paths. A fruitful direction for future research may be seeing if there exists a method that can achieve the same perfect soundness while using fewer cards. Furthermore, our checking algorithm requires many steps, and an algorithm that can check more efficiently would be another possible direction for improvement.  Finally, we encourage generalizations of these to protocols to directed graphs, other games, logic puzzles, and other problems in algorithmic graph theory. 
 
\printbibliography

\end{document}